\newtheorem{lemma}{Lemma}
\newtheorem{proposition}{Proposition}
\newtheorem{corollary}{Corollary}
\newtheorem{property}{Property}
\newtheorem{remark}{Remark}
\newtheorem{claim}{Claim}
\begin{document}

\title{\huge Multi-Point Integrated Sensing and Communication: \\Fusion Model and Functionality Selection}

\author{Guoliang Li, Shuai Wang, Kejiang Ye, Miaowen Wen,\\Derrick Wing Kwan Ng, \emph{Fellow, IEEE}, and Marco Di Renzo, \emph{Fellow, IEEE}   
\vspace{-24pt}
\thanks{
This work was supported by the National Key R\&D Program of China (No. 2021YFB3300200), the National Natural
Science Foundation of China (No. 62001203), the Shenzhen Science and Technology Program (No. RCB20200714114956153), and the Guangdong Basic and Applied Basic Research Project (No. 2021B1515120067). The work of D. W. K. Ng is supported by the Australian Research Council's Discovery Project (DP210102169).
Corresponding author: Shuai Wang.

Guoliang~Li is with the Shenzhen Institute of Advanced Technology, Chinese Academy of Sciences, Shenzhen 518055, China, and also with the Department of Electrical and Electronic Engineering, Southern University of Science and Technology (SUSTech), Shenzhen 518055, China (e-mail: ligl2020@mail.sustech.edu.cn).

Shuai~Wang and Kejiang~Ye are with the Shenzhen Institute of Advanced Technology, Chinese Academy of Sciences, Shenzhen 518055, China (e-mail: \{s.wang,~kj.ye\}@siat.ac.cn).

Miaowen Wen is with the School of Electronic and Information Engineering, South China University of Technology, Guangzhou 510640, China (e-mail: eemwwen@scut.edu.cn).

Derrick Wing Kwan Ng is with the School of Electrical Engineering and Telecommunications, the University of New South Wales, Australia (e-mail: w.k.ng@unsw.edu.au).

Marco Di Renzo is with Universit\'e Paris-Saclay, CNRS, CentraleSup\'elec, Laboratoire des Signaux et Syst\`emes, 3 Rue Joliot-Curie, 91192 Gif-sur-Yvette, France (marco.di-renzo@universite-paris-saclay.fr).
}
}
\maketitle

\begin{abstract}
Integrated sensing and communication (ISAC) represents a paradigm shift, where previously competing wireless transmissions are jointly designed to operate in harmony via the shared use of the hardware platform for improving the spectral and energy efficiencies.
However, due to adversarial factors such as fading and interference, ISAC may suffer from high sensing uncertainties.
This paper presents a multi-point ISAC (MPISAC) system that fuses the outputs from multiple ISAC devices for achieving higher sensing performance by exploiting multi-view data redundancy.
Furthermore, we propose to effectively explore the performance trade-off between sensing and communication via a functionality selection module that adaptively determines the working state (i.e., sensing or communication) of an ISAC device.
The crux of our approach is to derive a fusion model that predicts the fusion accuracy via hypothesis testing and optimal voting analysis.
Simulation results demonstrate the superiority of MPISAC over various benchmark schemes and show that the proposed approach can effectively span the trade-off region in ISAC systems.
\end{abstract}

\begin{IEEEkeywords}
Integrated sensing and communication, multi-view fusion, functionality selection.
\end{IEEEkeywords}

\IEEEpeerreviewmaketitle

\vspace{-7pt}

\section{Introduction}

\begin{figure*}[!t]
\centering
\includegraphics[width=0.98\textwidth]{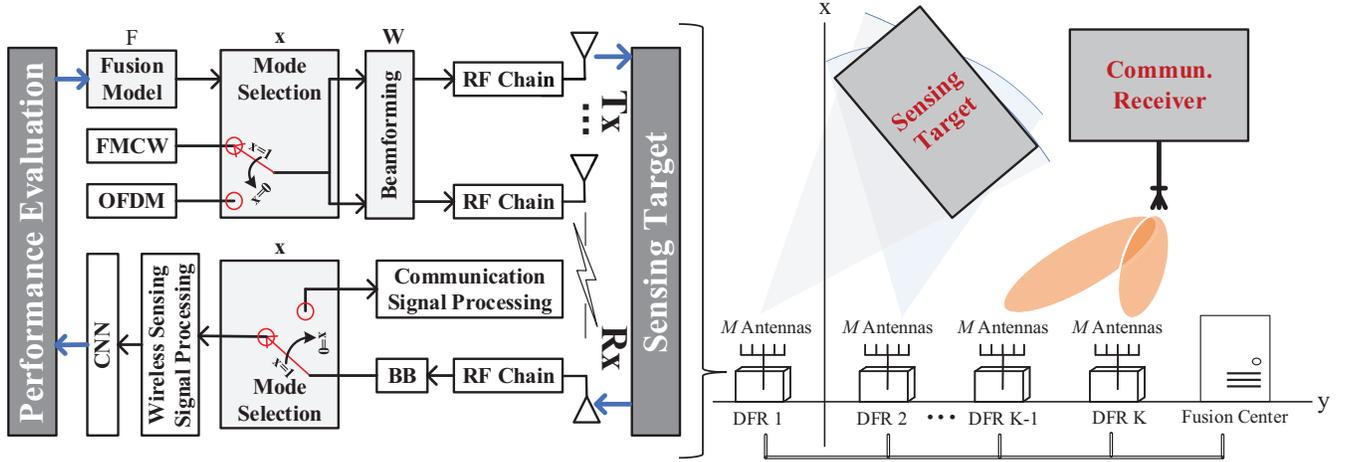}
\caption{The MPISAC system with $K$ DFRs, one target object, and one communication receiver.
The architecture of each DFR is shown on the left-hand side.
A functionality selection module controls the transmitted symbol and the receiving process under the instruction of the fusion model.
Note that the system could be extended to multi-target scenarios since
the distance and azimuth of each target could be extracted via time of flight estimation.} \label{Scenario}
\vspace{-12pt}
\end{figure*}

\IEEEPARstart{E}{lectromagnetic} waves can sense environments and carry information simultaneously.
Nevertheless, the two functionalities have traditionally been studied separately, resulting in resource competition between sensing and communication systems.
Currently, wireless systems experience a paradigm shift towards integrated sensing and communication (ISAC) \cite{ISAC}, i.e., unifying wireless sensing of environments and transmission of data so as to make the best use of the limited spectrum and costly hardware {platforms}.

Despite various efforts and successes in developing ISAC techniques \cite{Fan_1,Fan_2}, a number of technical
challenges still need to be properly handled, which include:
\begin{itemize}
\item[1)] \textbf{Reduction of sensing uncertainties}. The sensing data at each individual radar may be noisy or incomplete due to wireless fading and interference \cite{Noise}. Aggregating redundant measurements from multiple radars can help mitigate such uncertainties \cite{Fusion_1,Fusion_2}.

\item[2)] \textbf{Binary functionality selection}. The dual-functional radar (DFR) contains discrete elements such as switches \cite{Fan_1}. The binary state of the switch at each DFR needs to be carefully controlled according to both the application-layer (e.g., detection error \cite{Noise}) and the physical-layer parameters (e.g., channel qualities).

\item[3)] \textbf{Characterizing sensing-communication trade-off}. For conventional geometric sensing tasks (e.g., localization), the sensing performance is known to be evaluated by the Cram\'er-Rao bound (CRB) \cite{Fan_2}. Nonetheless, for the emerging semantic and prediction sensing tasks, the sensing performance cannot be straightforwardly computed as sophisticated deep neural networks are involved for feature extraction \cite{SPAWC}.
\end{itemize}
Existing works address the above issues from either an ISAC (e.g., \cite{Fan_1,Fan_2}) or multi-radar fusion (e.g., \cite{Fusion_1,Fusion_2}) perspective, which ignores the interdependency between sensing and communication in multi-point scenarios, thus failing in achieving high sensing accuracy and spectral efficiency simultaneously under stringent resource constraints.

To fill this gap, this paper studies a multi-point ISAC (MPISAC) system that fuses the outputs from multiple DFRs under wireless resource constraints.
First, by leveraging a set of newly derived models, the fusion gain brought by the MPISAC over the conventional ISAC is quantified.
Second, by combining the two conflicting goals of fusion accuracy and communication rate, via the weighted-sum method,
a joint transmit beamforming and functionality selection problem is formulated to maximize the mixed objectives.
Despite its implicit objective function and discontinuous design variables,
a hybrid meta-heuristic and optimization (HMO) algorithm maximizing a surrogate objective function is developed, which converges to a local optimal solution to the surrogate problem.
The approximation gap between the original and surrogate problems is quantified.
Third, the proposed algorithm is implemented in a high-fidelity wireless radar simulator.
Experimental results show that the quality of the wireless channel plays a key role in functionality selection and that incorporating fusion mechanisms improves the overall ISAC system performance.
The performance trade-off between sensing and communication is also illustrated.
To the best of our knowledge, this is the first work that integrates multi-view fusion and ISAC technologies to facilitate the design of wireless {systems}.

\section{System Model}

We consider the {MPISAC} system shown in Fig.~\ref{Scenario}, where $K$ DFRs, each equipped with $M$ transmit antennas and a single receive antenna, adopt electromagnetic waves for anomaly target detection while transmitting a comman information message to a single-antenna mobile device. The DFRs are wireline connected to a fusion center (i.e., edge server), which aggregates the outputs of multiple DFRs for achieving higher detection accuracy.
The left-hand side of Fig.~\ref{Scenario} illustrates the architecture of each DFR.
It can be seen that both the sensing and communication functionalities share the same hardware units, e.g., waveform generators, beamformers, radio-frequency (RF) chains, etc.
A functionality selection module, which is represented by a binary variable $\mathbf{x}=[x_1,\cdots,x_K]^T\in\{0,1\}^K$, is adopted to determine the working states of all the DFRs. Specifically, $x_{i}=1,~\forall i\in \left\{1,...,K\right\}$, represents that the $i$-DFR operates in the sensing mode and $x_{i}=0$ denotes that the $i$-DFR operates in the communication mode.

\subsubsection{Sensing Signal Model}
When $x_i=1$, the $i$-th DFR sends a probing signal, which is reflected by the target and is received by the $i$-th DFR as
\begin{align}
	\vspace{-7pt}
	\!\!\!
	y_{i}&=
	x_{i}\mathbf{g}_{ii}^{H}\mathbf{w}_{i}s_i+
	\sum\limits_{j\neq i}I_{j,i}
	+z_i, \label{y_s}
	\vspace{-7pt}
\end{align}
where $I_{j,i}=x_{j}\mathbf{h}_{ji}^{H}\mathbf{w}_{j}s_j+(1-x_{j})\mathbf{h}_{ji}^{H}\mathbf{w}_{j}c$.
In particular, $s_i\in \mathbb{C}$ and $c\in \mathbb{C}$ are the sensing and communication signals, respectively, where $\mathbb{E}\left[|s_i|^2\right]=1$, $\forall i$, and $\mathbb{E}\left[|c|^2\right]=1$, without loss of generality.
The vector $\mathbf{w}_{i} \in \mathbb{C}^M$ denotes the transmit beamformer at the $i$-th DFR with the power constraints $\Vert\mathbf{w}_i\Vert^2\leq P_T$ and $\sum_{i=1}^{K}\Vert\mathbf{w}_i\Vert^2\leq P_{\rm{sum}}$.
The vector $\mathbf{g}_{ji}\in \mathbb{C}^{M}$ denotes the two-hop channel that is the product of the channel from the $j$-th DFR to the target and the channel from the target to the $i$-th DFR.
The vector $\mathbf{h}_{ji}\in \mathbb{C}^{M}$ represents the line-of-sight (LOS) channel from the $j$-th DFR to the $i$-th DFR.
The scalar $z_i \sim \mathcal{CN}(0,\sigma^2)$ is the additive white Gaussian noise (AWGN) and $\sigma^2$ is the noise power.
The associated sensing signal-to-interference-plus-noise ratio (SINR) at the $i$-th DFR is\footnote{Note that the communication signals received at the $i$-th DFR
are identical but they go through different channels.
Hence, we need to sum up their amplitudes instead of their signal strengths for computing the total power.}
\begin{align}
&\text{SINR}_i^{\mathrm{s}}=\frac{x_{i}|\mathbf{g}_{ii}^{H}\mathbf{w}_i|^2}
{\sigma^2 + \sum_{j\neq i}|x_j\mathbf{h}_{ji}^{H}\mathbf{w}_{j}|^2 + |\sum_{j\neq i}(1-x_j)\mathbf{h}_{ji}^{H}\mathbf{w}_{j}|^2}.
\end{align}

\subsubsection{Anomaly Detection Model}
To extract the target information embedded in the signal $y_{i}$, we feed $y_{i}$ into a sensing signal processing module (as shown in the lower left of Fig.~\ref{Scenario}) for generating a motion-related image termed ``spectrogram'' (as shown in Fig.~\ref{SINR}) \cite{SPAWC}. As such, the original radar-based anomaly detection problem is converted into an image classification problem that can be effectively tackled by {a} convolutional neural network (CNN) \cite{Noise}.
However, the image quality could be unacceptable if the SINR is below a certain threshold as observed from Fig.~\ref{SINR}.
Furthermore, the detection accuracy of {a} CNN is significantly deteriorated if the image is of low quality, as shown in Fig.~\ref{Acc}. Lastly, forwarding false detections to the fusion center may break down the entire MPISAC system.
Therefore, among all the sensing DFRs $\{i:x_i=1\}$, it is necessary to ignore ineffective DFRs (i.e., those with low detection accuracies) and to fuse only effective DFRs (i.e., those with high detection accuracies).
This can be realized by setting a target detection accuracy threshold and reading the associated SINR threshold $\gamma$ from Fig.~\ref{Acc}
(we set $\gamma=30\,$dB according to the widely adopted detection accuracy threshold 89\% \cite{Noise} corresponding to the gray bar in Fig. 2(b)).
Since the detection accuracy is monotonically increasing with the received SINR as shown in Fig.~\ref{Acc}, the $i$-th DFR is deemed effective if $\text{SINR}_i^{\mathrm{s}}\geq\gamma$; otherwise, the $i$-th DFR is deemed ineffective.
The set of effective DFRs is thus $\mathcal{E}=\{i: x_i=1,\, \text{SINR}_i^{\mathrm{s}}\geq\gamma\}$ whose cardinality is $|\mathcal{E}|$.

\begin{figure*}[!t]
	\vspace{-0.2cm}
	\centering
	\subfigure[]{
		\label{SINR} 
		\includegraphics[height=2.6in]{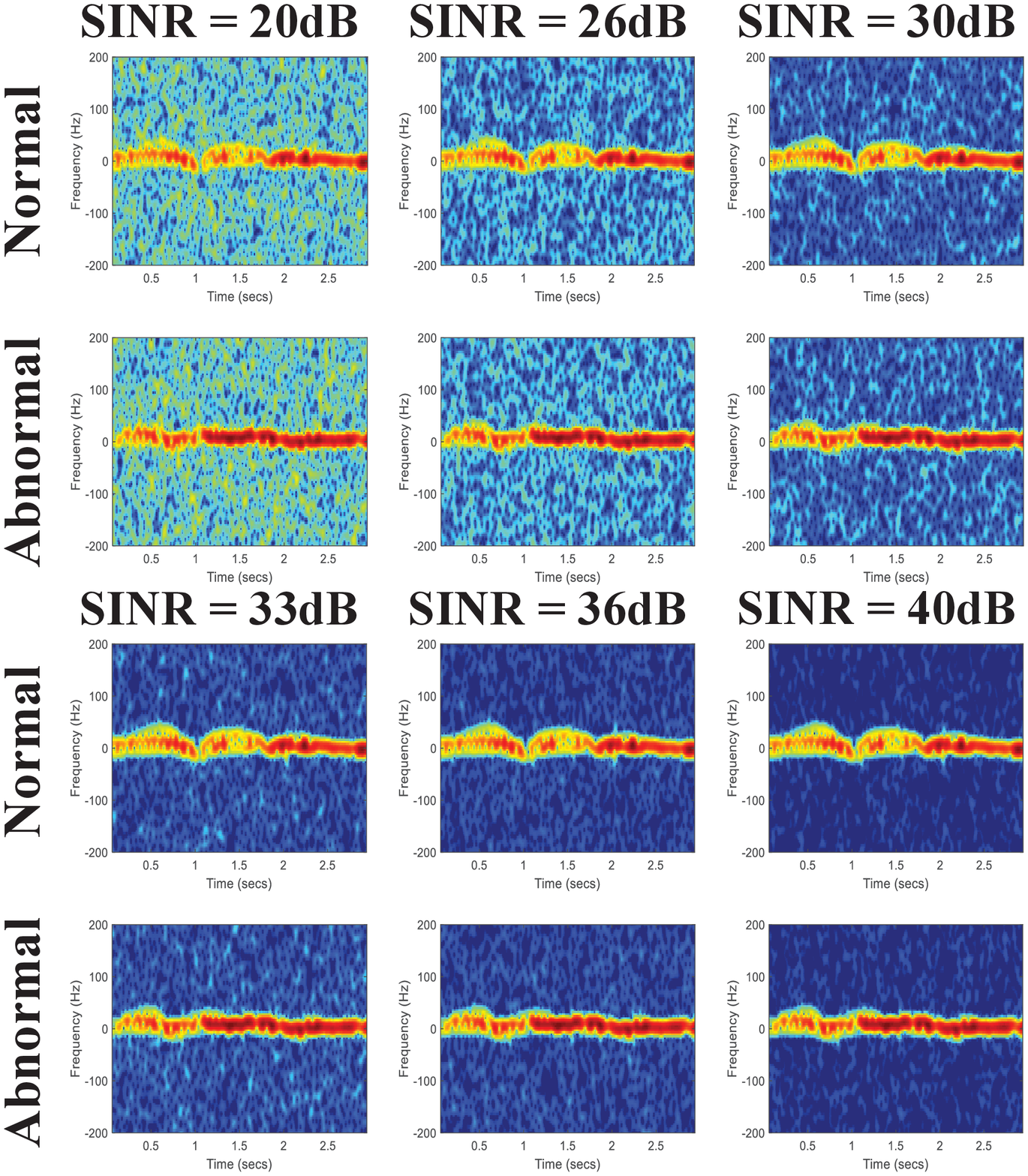}}
		\subfigure[]{
		\label{Acc} 
		\includegraphics[height=2.6in]{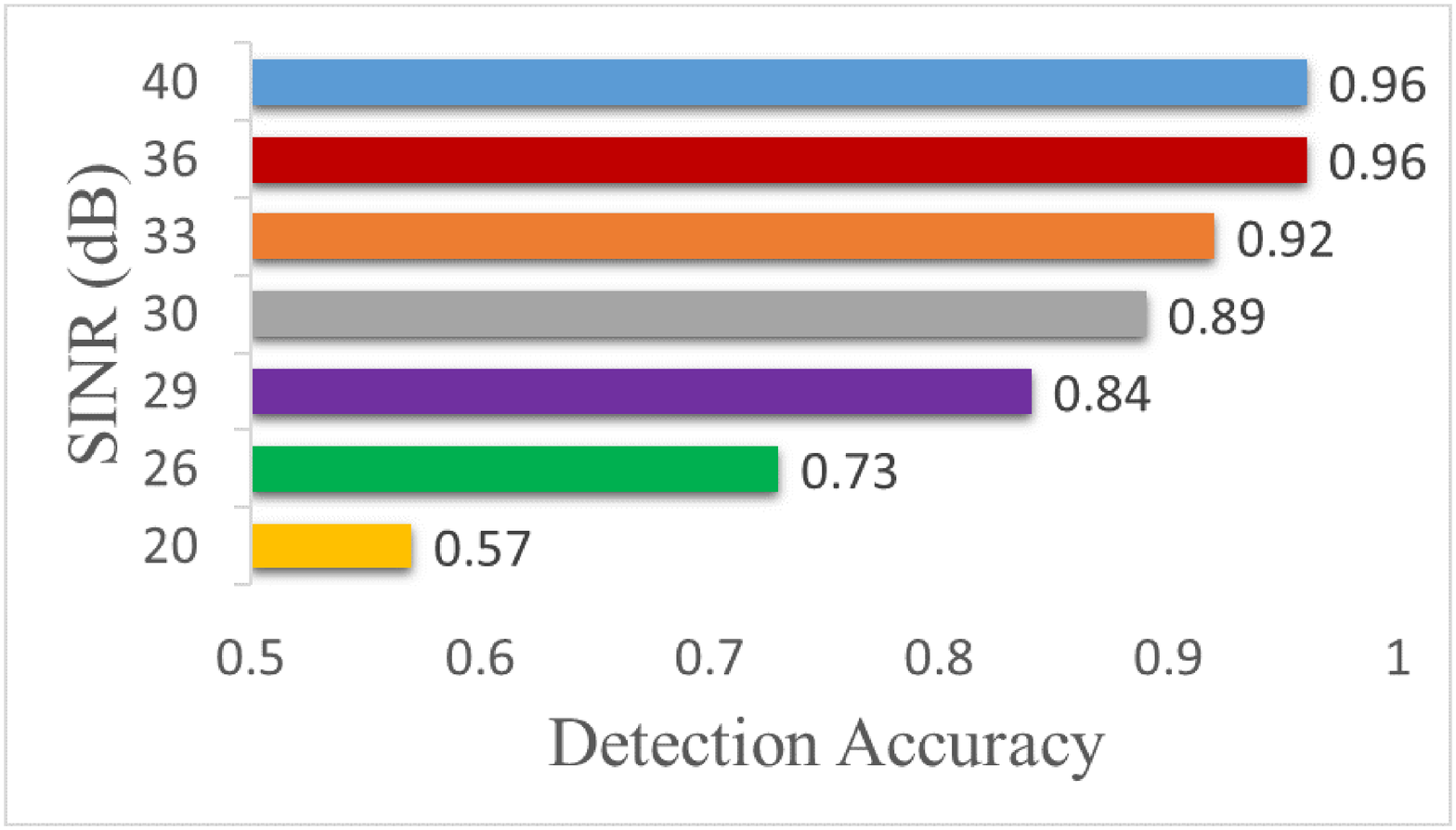}}
		\subfigure[]{
		\label{Optimla_n} 
		\includegraphics[height=2.6in]{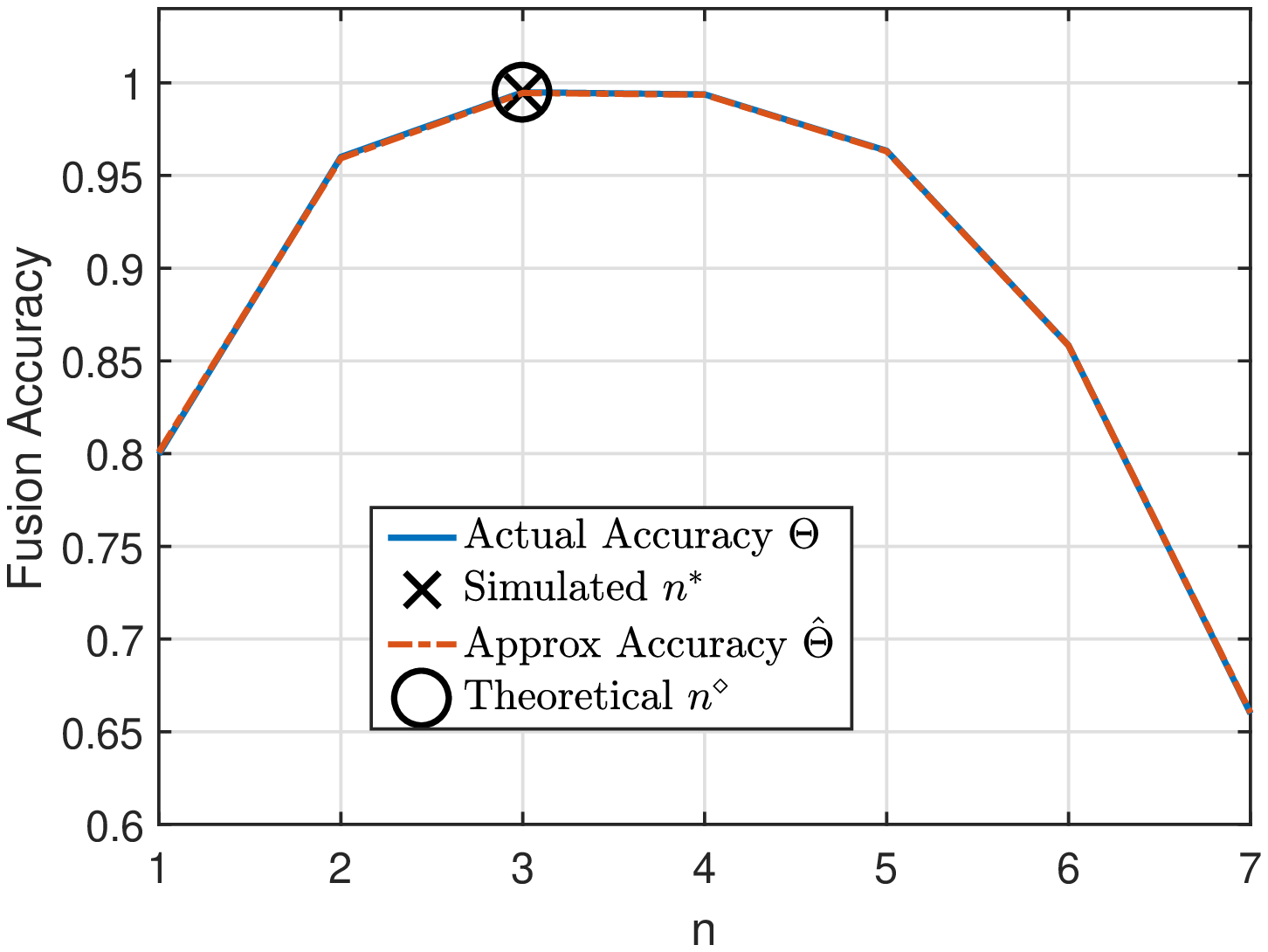}}
		\subfigure[]{
		\label{converge} 
		\includegraphics[height=2.6in]{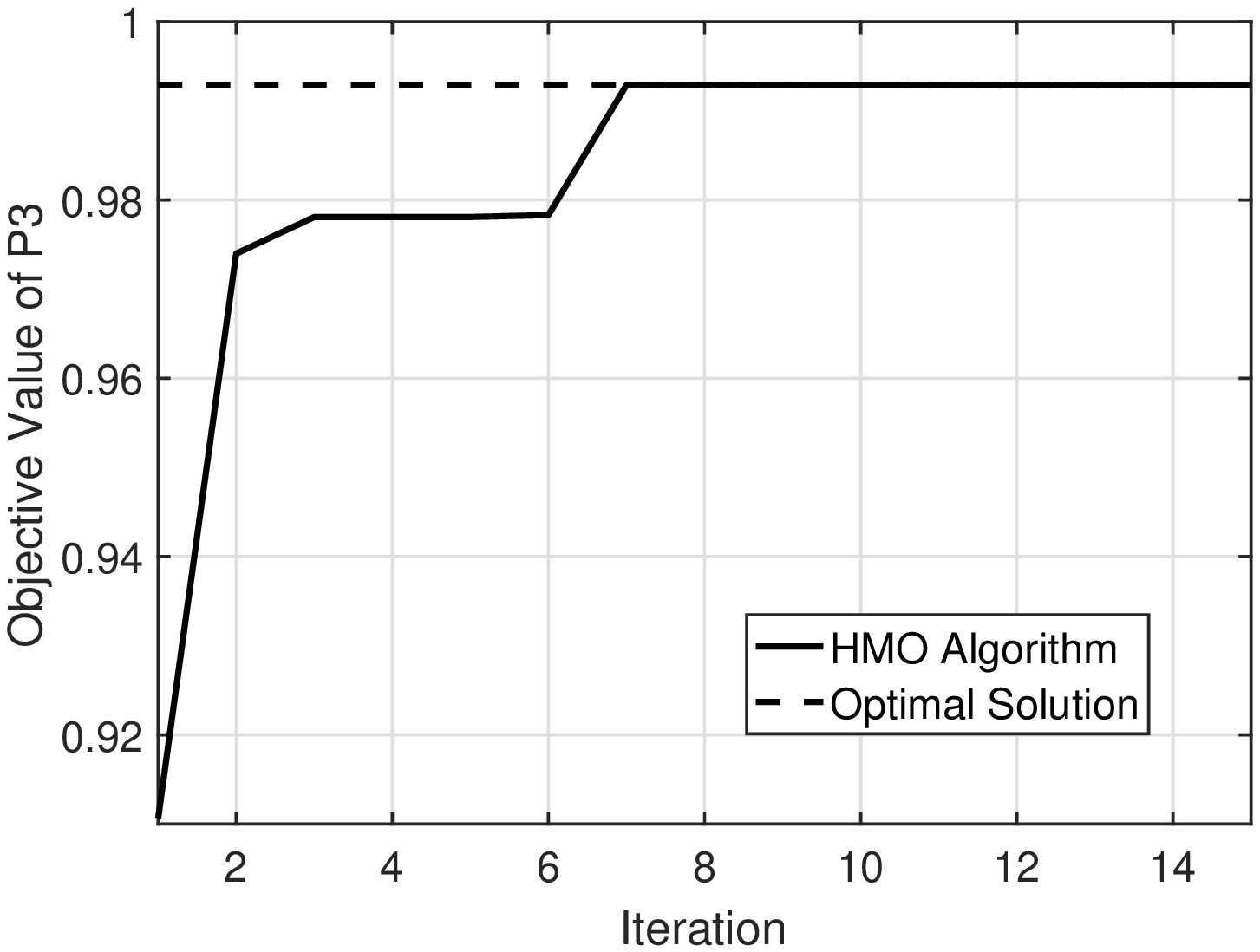}}
	\caption{ (a) Human-motion images under different SINRs; (b) Detection accuracy versus SINR; (c) Actual and approximate fusion accuracy versus $n$, where the associated false negative and false positive rates are $\left\{P_i\right\}_{i=1}^{7} = \{0.05,0.04,0.07,0.02,0.03,0.08,0.10\}$ and $\left\{Q_i\right\}_{i=1}^{7} = \{0.19,0.21,0.17,0.16,0.15,0.13,0.11\}$, respectively; (d) The objective value of $\mathcal{P}3$ versus the number of iteration for the proposed HMO algorithm.}
	\label{plot} 
\vspace{-12pt}
\end{figure*}

\subsubsection{Fusion Model}
The outputs of {the} effective DFRs $\mathcal{E}$ are aggregated via voting, which can be modeled by hypothesis-testing:
\begin{align}
H_1: \mathrm{Normal~target}; \quad H_2: \mathrm{Abnormal~target}.
\end{align}
Then, the ``$n$ out of $|\mathcal{E}|$'' voting rule \cite{voting} is given by
\begin{align}
\mathcal{H}_1:\sum_{i\in\mathcal{E}}D_i< n; \quad \mathcal{H}_2:\sum_{i\in\mathcal{E}}D_i\geq n,
\end{align}
where $D_i$ is the binary inference result of the $i$-th DFR with $D_i=0$ standing for a normal target and $D_i=1$ standing for an abnormal target.
The fusion accuracy with the voting threshold $n$ is\footnote{Due to the different observation views,
the sensing data at different DFRs is assumed to be independent \cite{Fusion_1}.}
\vspace{-0.3cm}
\begin{align}
	\Theta(\mathcal{E}|n)&=\frac{1}{2}\sum\limits_{l=0}^{n-1}\sum\limits_{F\in\mathcal{F}_l}\prod
	_{i\in F}P_i\prod_{i\in F^c}(1-P_i)
\nonumber\\
&\quad{}{}
+\frac{1}{2}\sum\limits_{l=0}^{|\mathcal{E}|-n}\sum\limits_{F\in\mathcal{F}_l}\prod
	_{i\in F}Q_i\prod_{i\in F^c}(1-Q_i), \label{exact}
\end{align}
where the parameters in \eqref{exact} are detailed as follows:
(1) $P_i$ and $Q_i$ are the false negative (false alarm) and false positive (missing alarm) rates at the $i$-th DFR, respectively, which are estimated from the experimental data; note that different DFRs may have different $(P_i,Q_i)$ under the same SINR due to different observation angles;
(2) $\mathcal{F}_l$ contains all the subsets with $l$ unique DFRs from $\mathcal{E}$ and its cardinality is $|\mathcal{F}_l|=\tbinom{|\mathcal{E}|}{l}=\frac{|\mathcal{E}|!}{l!\left(|\mathcal{E}|-l\right)!}$;
(3) For any $F\in \mathcal{F}_l$, its complement is denoted as $F^{c}=\mathcal{E}\backslash F$.
It can be seen that the fusion accuracy under optimal voting is $\max_n\Theta(\mathcal{E}|n)$.

\subsubsection{Joint Transmission Model}
For the DFRs with $\{x_i=0,\forall i\}$, the jointly transmitted signal received at the communication receiver is
\begin{align}
	r&=\sum_{i=1}^K(1-x_{i})\mathbf{f}_{i}^{H}\mathbf{w}_{i}c+
		\sum_j I_{j}'
		+z_i', \label{y_c}
\end{align}
where $I_{j}'=x_{j}\mathbf{f}^{H}_{j}\mathbf{w}_{j}s_j$.
{The vector} $\mathbf{f}_{j}\in \mathbb{C}^{M}$ is the channel from the $j$-th DFR to the communication receiver.
{The scalar} $z_i' \sim \mathcal{CN}(0,\sigma^2)$ is the additive white Gaussian noise (AWGN).
The associated communication SINR is written as
\begin{align}
\text{SINR}^{\mathrm{c}}=\frac{|\sum_{i=1}^K(1-x_{i})\mathbf{f}_i^{H}\mathbf{w}_i|^2}
{\sigma^2 + \sum_{j} |x_{j}\mathbf{f}^{H}_{j}\mathbf{w}_{j}|^2}.
\end{align}
The communication spectral efficiency in bps/Hz is
\begin{align}\label{drate}
R(\mathbf{x},\{\mathbf{w}_i\})=\log_{2}\left(1+\frac{|\sum_{i=1}^K(1-x_{i})\mathbf{f}_i^{H}\mathbf{w}_i|^2}
{\sigma^2 + \sum_{j} |x_{j}\mathbf{f}^{H}_{j}\mathbf{w}_{j}|^2}\right).
\end{align}

\vspace{-0.3cm}
\section{Proposed Functionality Selection and Beamforming Design Algorithm}

The MPISAC system aims to maximize the sensing and communication performance under the mode selection, transmit power, and effective sensing constraints, which results in the following multi-objective optimization problem
\begin{subequations}
	\begin{align}
	\mathcal{P}0:\mathop{\mathrm{max}}_{\substack{\mathbf{x},\{\mathbf{w}_{i}\},\mathcal{E}}}~~
	&
	\left\{\max_n\Theta(\mathcal{E}|n),R(\mathbf{x},\{\mathbf{w}_i\})\right\} \nonumber\\
	\mathrm{s.t.}~~~ \label{P101}
	&x_i \in \left\{0,1\right\},~\forall i, \\
	&\sum_{i=1}^{K}\Vert\mathbf{w}_i\Vert^2\leq P_{\rm{sum}},~\Vert\mathbf{w}_i\Vert^2\leq P_T,\,\forall i, \label{P102}
\\ &\mathcal{E}=\{i: x_i=1,\, \text{SINR}_i^{\mathrm{s}}\geq\gamma\}.
	\end{align}
\end{subequations}
To {solve} $\mathcal{P}0$, the following transformations are adopted:
\begin{itemize}
\item[1)] The multi-objective function in $\mathcal{P}0$ is transformed into $(1-\mu)\max_n\Theta(\mathcal{E}|n)+\mu R(\mathbf{x},\{\mathbf{w}_i\})$, where $\mu\in[0,1]$ is the weight to prioritize the two objectives.\footnote{The optimal solution to the weighted-sum problem of $\mathcal{P}0$ is guaranteed to be a Pareto optimal solution to $\mathcal{P}0$ \cite{Weight}.}
\item[2)] The optimal $\mathbf{x}^*$ and $\mathcal{E}^*$ satisfy $\mathcal{E}^*=\{i:x_i^*=1\}$.\footnote{This can be proved by contradiction.
Suppose that there exists some $j\in\{i:x_i^*=1\}$ such that $x_j^*=1$ and $\text{SINR}_j^{\mathrm{s}}<\gamma$.
Then, we can always set $x_j=0$ such that the value of $\max_n\Theta(\mathcal{E}|n)$ is unchanged while $R(\mathbf{x},\{\mathbf{w}_i\})$ is increased.
This contradicts the optimality of $\mathbf{x}^*$.
Hence, for any $j\in\{i:x_i^*=1\}$, we must have $\text{SINR}_j^{\mathrm{s}}\geq\gamma$, meaning that $\mathcal{E}^*=\{i:x_i^*=1\}$.}
Therefore, replacing $\mathcal{E}$ with a new variable $\mathcal{S}=\{i:x_i=1\}$ (with its cardinality $|\mathcal{S}|$ being the number of sensing DFRs) in $\mathcal{P}0$ would not change the problem solution.
\end{itemize}
With these observations, $\mathcal{P}0$ is \emph{equivalently} transformed into
\begin{subequations}
	\begin{align}
	\mathcal{P}1:\mathop{\mathrm{max}}_{\substack{\mathbf{x},\{\mathbf{w}_{i}\},\mathcal{S}}}~~
	&
	(1-\mu)\max_n\Theta(\mathcal{S}|n)+\mu R(\mathbf{x},\{\mathbf{w}_i\})\nonumber\\
	\mathrm{s.t.}~~~ &\eqref{P101},\,\eqref{P102},\,\mathcal{S}=\{i:x_i=1\},
\\ &\text{SINR}_i^{\mathrm{s}}\geq\gamma,~\forall i\in\mathcal{S}.
	\end{align}
\end{subequations}
The challenges of solving problem $\mathcal{P}1$ are three-fold: 1) the nonlinear coupling between $\mathbf{x}$ and $\{\mathbf{w}_{i}\}$; 2) the implicit function $\max_n\Theta(\mathcal{S}|n)$; 
3) the discontinuity of $\mathbf{x}$.

\emph{1) Bearmforming Design:} To address the first challenge, zero-forcing (ZF) beamforming $\mathbf{w}_i=\sqrt{p_i}\,\mathrm{e}^{\mathrm{j}\phi_i}\mathbf{w}_i^{\mathrm{ZF}}$ is adopted \cite{zf},
where $p_i$ and $\phi_i$ are the transmit power and the phase shift of the $i$-th beam, respectively, and $\mathbf{w}_{i}^{\mathrm{ZF}}$ is the steering vector for interference cancelation.
In particular, {to mitigate the phase differences among the signals $\{(1-x_{i})\mathbf{f}_i^{H}\mathbf{w}_i,\forall i\}$ in \eqref{drate}, we have $\phi_i=\angle(\mathbf{f}_i^{H}\mathbf{w}_i^{\mathrm{ZF}})$.}
On the other hand, as for $\mathbf{w}_{i}^{\mathrm{ZF}}$, define $\mathbf{F}_i=\left[\mathbf{g}_{ii}~\mathbf{h}_{i1}\dots\mathbf{f}_i\dots\mathbf{h}_{iK}\right]^H$ and
$\mathbf{H}_i=\left[\mathbf{h}_{i1}\dots\mathbf{f}_i\dots\mathbf{h}_{iK}\right]^H$ ($\mathbf{f}_i^H$ is at the
($i+1$)-th row of $\mathbf{F}_i$ and $i$-th row of $\mathbf{H}_i$).
According to \cite{zf}, the vector $\mathbf{w}_{i}^{\mathrm{ZF}}$ is the normalized $1$-st column of $\mathbf{F}_i^{H}\left(\mathbf{F}_i\mathbf{F}_i^{H}\right)^{-1}$ if $x_i = 1$ or the normalized $i$-st column of $\mathbf{H}_i^{H}\left(\mathbf{H}_i\mathbf{H}_i^{H}\right)^{-1}$ if $x_i = 0$.
Note that although the ZF beamforming is generally suboptimal, the gap between the proposed and optimal schemes is negligible for massive MIMO {implementations} due to the law of large numbers (the gap $\rightarrow 0$ when $M\rightarrow\infty$).
Putting $\mathbf{w}_i=\sqrt{p_i}\,\mathrm{e}^{\mathrm{j}\phi_i}\mathbf{w}_i^{\mathrm{ZF}}$ into problem $\mathcal{P}1$, $\mathcal{P}1$ is approximately converted to
\vspace{-0.2cm}
\begin{subequations}
	\begin{align}
	\mathcal{P}2:
	\mathop{\mathrm{max}}_{\substack{\mathbf{x},\{p_{i}\},\mathcal{S}}}~&
(1-\mu)\max_n\Theta(\mathcal{S}|n)+\mu R(\mathbf{x},\{p_i\})\nonumber\\
	\mathrm{s.t.}~~~ \label{P01}
	&x_i \in \left\{0,1\right\},~\forall i,\quad \mathcal{S}=\{i:x_i=1\}, \\
	&\sum_{i=1}^{K}p_i\leq P_{\rm{sum}},~0\leq p_i\leq P_T,~\forall i,\label{P02} \\
	&p_{i}|\mathbf{g}_{ii}^{H}\mathbf{w}_i^{\mathrm{ZF}}|^2
	\geq \sigma^2\gamma,~\forall i\in\mathcal{S}. \label{P03}
	\end{align}
\end{subequations}
where the simplified data-rate is
\begin{align}
R(\mathbf{x},\{p_i\})=\log_{2}\left[1+\frac{1}{\sigma^2}\left(\sum\limits_{i=1}^K|\mathbf{f}_i^{H}\mathbf{w}_i^{\mathrm{ZF}}|(1-x_{i})\sqrt{p_i}\right)^2
	\right]. \nonumber
\end{align}

\emph{2) Surrogate Fusion Function:} To address the second challenge, we propose to approximate the actual fusion accuracy through a binomial approximation \cite{Binomial}:
\begin{align}
	\hat{\Theta}(\mathcal{S}|n)&=\frac{1}{2}\sum\limits_{l=0}^{n-1}\tbinom{|\mathcal{S}|}{l}P^l(1-P)^{|\mathcal{S}|-l}
\nonumber\\
&\quad{}{}
+\frac{1}{2}\sum\limits_{l=0}^{|\mathcal{S}|-n}\tbinom{|\mathcal{S}|}{l}Q^{l}(1-Q)^{|\mathcal{S}|-l},
	\label{upper_bound}
\end{align}
where $P=\frac{1}{|\mathcal{S}|}\sum_{i=1}^{|\mathcal{S}|}P_i$ and $Q=\frac{1}{|\mathcal{S}|}\sum_{i=1}^{|\mathcal{S}|}Q_i$.
As such, the factorials are averaged out in \eqref{upper_bound}.
The following proposition is established to quantify the approximation gap and derive the associated voting threshold $n$.
\begin{proposition}
(i) The approximation gap is bounded as
\begin{align}
&\frac{1}{2}\sum_{n=1}^{|\mathcal{S}|}\left|\Theta(\mathcal{S}|n)-
\hat{\Theta}(\mathcal{S}|n)\right|
\nonumber\\
&\quad{}{}
\leq\mathcal{D}\left(\left\{P_i\right\}_{i=1}^{|\mathcal{S}|}\right)+
	\mathcal{D}\left(\left\{Q_i\right\}_{i=1}^{|\mathcal{S}|}\right),~\forall\mathcal{S},
\end{align}
where
$\mathcal{D}\left(\left\{P_i\right\}_{i=1}^{|\mathcal{S}|}\right) =
\frac{|\mathcal{S}|\left(1-P^{|\mathcal{S}|+1}-(1-P)^{|\mathcal{S}|+1}\right)}{(|\mathcal{S}|+1)P(1-P)}\sum_{i=1}^{|\mathcal{S}|}(P_i-P)^2$ and vice versa for $\mathcal{D}\left(\left\{Q_i\right\}_{i=1}^{|\mathcal{S}|}\right)$.

\noindent(ii) The optimal $n$ that maximizes $\hat{\Theta}(\mathcal{S}|n)$ is
\vspace{-0.1cm}
\begin{equation}
	n^\diamond = \min\left(|\mathcal{S}|,\left\lceil \frac{|\mathcal{S}|}{1+\alpha} \right\rceil\right), \label{n_opt}
\end{equation}
\vspace{-0.1cm}
where $\alpha =\frac{\ln\frac{P}{1-Q}}{\ln\frac{Q}{1-P}}$ and $\lceil \cdot \rceil$ denotes the ceiling function.
\end{proposition}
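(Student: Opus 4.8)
\begin{IEEEproof}
\emph{Part (ii).} The two parts are independent; begin with (ii). Write $S=|\mathcal S|$. Differencing~\eqref{upper_bound} and using $\tbinom{S}{n}=\tbinom{S}{S-n}$ gives
\begin{equation*}
\hat\Theta(\mathcal S|n+1)-\hat\Theta(\mathcal S|n)=\tfrac12\tbinom{S}{n}\left[P^{n}(1-P)^{S-n}-Q^{S-n}(1-Q)^{n}\right],
\end{equation*}
so the sequence strictly increases exactly when $P^{n}(1-P)^{S-n}>Q^{S-n}(1-Q)^{n}$. Taking logarithms and dividing through by $\ln\frac{Q}{1-P}$, which is negative in the regime $P+Q<1$ (whence also $\frac{P}{1-Q}<1$, so that $\alpha>0$), this condition reads $n\alpha<S-n$, i.e.\ $n<\frac{S}{1+\alpha}$. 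Thus $\hat\Theta(\mathcal S|\cdot)$ increases while $n<S/(1+\alpha)$ and is non-increasing afterwards, so over the feasible range $\{1,\dots,S\}$ its maximiser is the first index where the increment stops being positive, i.e.\ $\lceil S/(1+\alpha)\rceil$ truncated to $S$, which is~\eqref{n_opt}. The tie case $S/(1+\alpha)\in\mathbb Z$ (a flat peak, still captured by the ceiling) and the degenerate cases $P,Q\in\{0,1\}$ are absorbed by the outer $\min$.

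\emph{Part (i).} Note first that $\hat\Theta(\mathcal S|n)$ is exactly $\Theta(\mathcal S|n)$ of~\eqref{exact} with every $P_i$ replaced by $P$ and every $Q_i$ by $Q$, since $\sum_{F\in\mathcal{F}_l}\prod_{i\in F}P\prod_{i\in F^c}(1-P)=\tbinom{S}{l}P^{l}(1-P)^{S-l}$. Let $X=\sum_{i\in\mathcal S}X_i$ with $X_i\sim\mathrm{Bern}(P_i)$ independent, $Y\sim\mathrm{Bin}(S,P)$, and define $W,\widehat W$ analogously from $\{Q_i\}$ and $Q$. Then $\Theta(\mathcal S|n)=\tfrac12\Pr[X\le n-1]+\tfrac12\Pr[W\le S-n]$ and $\hat\Theta(\mathcal S|n)=\tfrac12\Pr[Y\le n-1]+\tfrac12\Pr[\widehat W\le S-n]$, hence $\Theta(\mathcal S|n)-\hat\Theta(\mathcal S|n)=\tfrac12\bigl(F_X(n-1)-F_Y(n-1)\bigr)+\tfrac12\bigl(F_W(S-n)-F_{\widehat W}(S-n)\bigr)$, with $F_X(m)=\Pr[X\le m]$ and so on. Since $X$ and $Y$ have the same mean, the point-mass differences $\delta_l=\Pr[X=l]-\Pr[Y=l]$ sum to $0$, so $\bigl|F_X(m)-F_Y(m)\bigr|=\bigl|\sum_{l\le m}\delta_l\bigr|\le\tfrac12\sum_l|\delta_l|=d_{\mathrm{TV}}(X,Y)$ for every $m$, and likewise for $(W,\widehat W)$. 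By the triangle inequality, $|\Theta(\mathcal S|n)-\hat\Theta(\mathcal S|n)|\le\tfrac12 d_{\mathrm{TV}}(X,Y)+\tfrac12 d_{\mathrm{TV}}(W,\widehat W)$ for all $n$; summing over $n=1,\dots,S$ and using the binomial-approximation bound of~\cite{Binomial}, $d_{\mathrm{TV}}(X,Y)\le\frac{1-P^{S+1}-(1-P)^{S+1}}{(S+1)P(1-P)}\sum_{i=1}^{S}(P_i-P)^2=\mathcal D(\{P_i\})/S$, together with its $Q$-analogue, yields $\tfrac12\sum_{n=1}^{S}|\Theta(\mathcal S|n)-\hat\Theta(\mathcal S|n)|\le\tfrac14\bigl(\mathcal D(\{P_i\})+\mathcal D(\{Q_i\})\bigr)$, which is even stronger than claimed.

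The reductions above are routine; the real content is the quantitative input to part~(i) --- the explicit constant $\frac{1-P^{S+1}-(1-P)^{S+1}}{(S+1)P(1-P)}$ for binomial approximation of a Poisson-binomial law. If this cannot simply be quoted from~\cite{Binomial}, it must be re-derived, e.g.\ by the Stein--Chen method or by expanding the signed measure $\mathcal L(X)-\mathcal L(Y)$ in Krawtchouk polynomials; the structural reason such a bound exists is that $F_X(m)$ is multilinear in $(P_1,\dots,P_S)$, so the first-order term of its Taylor expansion about $(P,\dots,P)$ vanishes (because $\sum_i(P_i-P)=0$), leaving a discrepancy that is genuinely quadratic in the dispersions $P_i-P$ --- exactly the form of $\mathcal D$. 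Part~(ii) poses no comparable difficulty beyond the sign bookkeeping in the logarithmic step.
\end{IEEEproof}
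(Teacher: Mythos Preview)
Your argument is correct and follows the same route as the paper: for (ii), both compute the forward difference $\hat\Theta(\mathcal S|n+1)-\hat\Theta(\mathcal S|n)$ and locate its sign change; for (i), both split $\Theta-\hat\Theta$ into $P$- and $Q$-parts via the triangle inequality and then invoke Ehm's binomial-approximation bound~\cite{Binomial}, with you making explicit the intermediate step (CDF difference bounded by $d_{\mathrm{TV}}$) that the paper simply absorbs into the citation. One harmless slip: the $\delta_l$ sum to zero because $X$ and $Y$ are both probability laws, not because they share a mean; your tighter constant (a factor $1/4$ of the stated bound) is genuine.
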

\begin{proof}
Part (i) is proved based on the Poisson approximation theorem \cite{poisson} and Chebyshev's inequality.
Part (ii) is proved by setting the derivative $\partial\hat{\Theta}/\partial n$ to zero.
For more details please refer to Appendix A.
\end{proof}

\begin{figure*}[!t]
	\centering
		\includegraphics[width=0.98\textwidth]{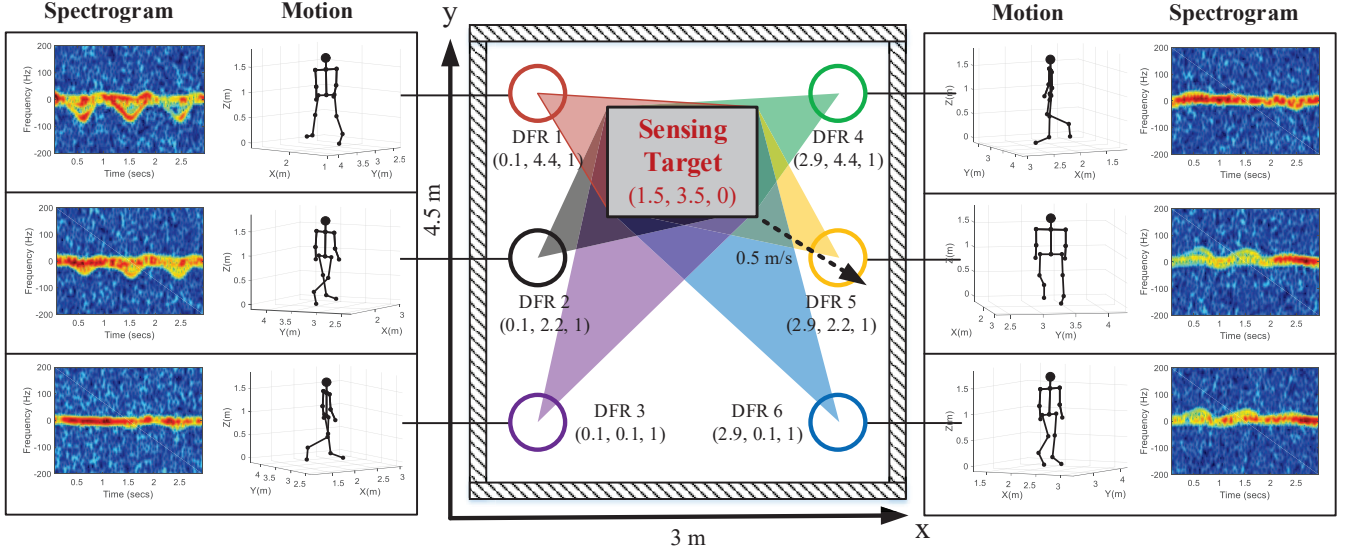}
		\vspace{-0.48cm}
	\caption{Illustration of the simulated scenario and the multi-view sensing data at DFRs. The associated false negative and false positive rates are given by $\left\{P_i\right\}_{i=1}^{6} = \{0.05,0.09,0.12,0.14,0.05,0.23\}$ and $\left\{Q_i\right\}_{i=1}^{6} = \{0.09,0.14,0.07,0.16,0.05,0.03\}$, respectively.
} \label{Simulation}
\vspace{-6pt}
\end{figure*}

\textbf{Proposition 1} states that the approximation model is close to the actual fusion accuracy.
To illustrate this, consider the case of $|\mathcal{S}|\,=\,7$ with $\{P_i,Q_i\}$ specified in Fig.~\ref{Optimla_n}.
It can be seen from Fig.~\ref{Optimla_n} that the gap between the approximate and actual fusion accuracy is nearly negligible.
Furthermore, \textbf{Proposition 1} states that the naive majority voting is generally not optimal.
This can be seen from Fig.~\ref{Optimla_n} that the optimal $n^\diamond$ is $3$ out of $7$ rather than $4$ out of $7$
and the theoretical voting threshold in \textbf{Proposition 1} matches the experimental result in Fig.~\ref{Optimla_n}.
Finally, \textbf{Proposition 1} states that $\hat{\Theta}$ is a monotonically increasing function of $|\mathcal{S}|$.
Therefore, the fusion gain $\hat{\Theta}(\mathcal{S}|n^\diamond)-\frac{1}{2}(P+Q)$ is always positive.

Based on \textbf{Proposition 1}, problem $\mathcal{P}2$ is converted into
\begin{align}
	\mathcal{P}3:
	\mathop{\mathrm{max}}_{\substack{\mathbf{x},\{p_{i}\},\mathcal{S}}}~&
\Xi\left(\mathbf{x},\{p_{i}\},\mathcal{S}\right), \quad \mathrm{s.t.}~\eqref{P01}-\eqref{P03},
\end{align}
where the new surrogate objective function is
\begin{align}
\Xi\left(\mathbf{x},\{p_{i}\},\mathcal{S}\right)&=(1-\mu)\hat{\Theta}\left[\mathcal{S}|\min\left(|\mathcal{S}|,\left\lceil \frac{|\mathcal{S}|}{1+\alpha} \right\rceil\right)\right]
\nonumber\\
&\quad{}{}+\mu R(\mathbf{x},\{p_i\}).
\end{align}

\emph{3) HMO Algorithm:}
To address the third challenge,
we present a computationally efficient algorithm to solve problem $\mathcal{P}3$.
In particular, we leverage the HMO framework in \cite{hybrid}, which starts from a feasible solution of $\mathbf{x}$ (e.g., $\mathbf{x}^{[0]}=[1,0,\cdots,0]^{T}$),
and randomly selects a candidate solution $\mathbf{x}'$ from the neighborhood
\begin{align}
\mathcal{N}(\mathbf{x}^{[0]})=\{\mathbf{x}:||\mathbf{x}-\mathbf{x}^{[0]}||_0\leq L,~x_i\in\{0,1\}, \forall i\},
\end{align}
where $L\geq 1$ is the variable size of neighborhood.
It can be seen that $\mathcal{N}(\mathbf{x}^{[0]})$ is a subset of the entire feasible space containing solutions ``close'' to $\mathbf{x}^{[0]}$.
$\mathcal{N}(\mathbf{x}^{[0]})$ is generated by randomly flipping $L$ elements inside $\{x_i\}$ \cite{hybrid}.
With the neighborhood $\mathcal{N}(\mathbf{x}^{[0]})$ defined above and the choice of $\mathbf{x}$ fixed to $\mathbf{x}=\mathbf{x}'\in\mathcal{N}(\mathbf{x}^{[0]})$,
the set of sensing DFRs is $\mathcal{S}'=\{i:x_i'=1\}$.
Then the problem $\mathcal{P}3$ w.r.t. $\{p_i\}$ keeping $\{\mathbf{x}=\mathbf{x}',\mathcal{S}=\mathcal{S}'\}$ fixed is
\begin{subequations}
	\begin{align}
	\mathcal{P}4: \mathop{\mathrm{max}}_{\substack{\{p_i\}}}~&\sum\limits_{i=1}^K(1-x_{i}')|\mathbf{f}_i^{H}\mathbf{w}_i^{\mathrm{ZF}}|\sqrt{p_i} \nonumber\\
	\mathrm{s.t.}~~~ &\sum_{i=1}^{K}p_i\leq P_{\rm{sum}},\quad0\leq p_i\leq P_T,~\forall i,\label{P201} \\
	&p_{i}|\mathbf{g}_{ii}^{H}\mathbf{w}_i^{\mathrm{ZF}}|^2\geq \sigma^2\gamma,~\forall i\in\mathcal{S}', \label{P202}
	\end{align}
\end{subequations}
where we have removed the terms not related to $\{p_i\}$ and the logarithm and quadratic functions due to their monotonicity.
Since the objective function is concave in $\{p_i\}$ and the constraints are linear, the problem $\mathcal{P}4$ is a convex optimization problem w.r.t. $\{p_{i}\}$, which can be readily solved via the open-source software CVXPY (https://www.cvxpy.org/index.html) with a computational complexity of $\mathcal{O}(K^{3.5})$.
Let $\{p_i'\}$ denote the optimal solution of $\{p_i\}$ to $\mathcal{P}4$.
We consider two cases:
(i) If $\Xi(\mathbf{x}',\{p_{i}'\},\mathcal{S}')\geq\Xi(\mathbf{x}^{[0]},\{p_{i}^{[0]}\},\mathcal{S}^{[0]})$, we update $\mathbf{x}^{[1]}\leftarrow\mathbf{x}'$. By treating $\mathbf{x}^{[1]}$ as a new feasible solution, we can construct the next neighborhood $\mathcal{N}(\mathbf{x}^{[1]})$;
(ii) If $\Xi(\mathbf{x}',\{p_{i}'\},\mathcal{S}')<\Xi(\mathbf{x}^{[0]},\{p_{i}^{[0]}\},\mathcal{S}^{[0]})$, we re-generate another point within the neighborhood $\mathcal{N}(\mathbf{x}^{[0]})$ until $\Xi(\mathbf{x}',\{p_{i}'\},\mathcal{S}')\geq\Xi(\mathbf{x}^{[0]},\{p_{i}^{[0]}\},\mathcal{S}^{[0]})$.

The above procedure is repeated to generate a sequence of $\{\mathbf{x}^{[1]},\mathbf{x}^{[2]},\cdots\}$ and the converged point is guaranteed to be a local optimal solution to $\mathcal{P}3$ \cite{dimitri}.
Moreover, as shown in Fig.~\ref{converge}, the proposed HMO method achieves performance close to that of the optimal solution obtained by exhaustive search.
In practice, we can terminate the iterative procedure when the number of iterations is larger than $\overline{\mathrm{Iter}}$, e.g., we can set $\overline{\mathrm{Iter}}=10$ for Fig.~\ref{converge}.
The complexity of the HMO algorithm is thus $\mathcal{O}(\overline{\mathrm{Iter}}\,K^{3.5})$.

\section{Simulation Results}

\begin{figure*}[!t]
	\centering
	\subfigure[]{
		\label{Comp} 
		\includegraphics[height=2.9in]{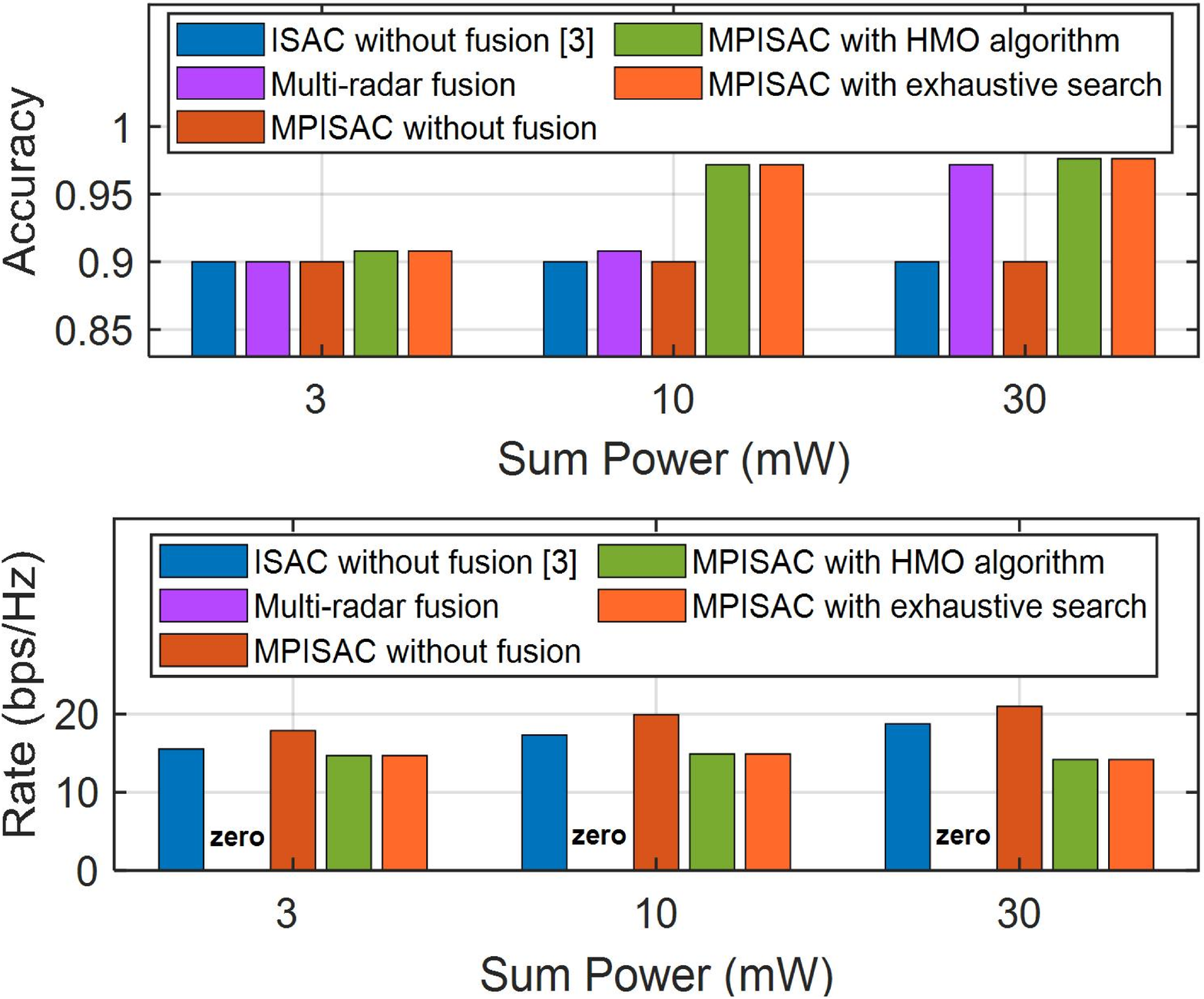}}
	\subfigure[]{
		\label{Region} 
		\includegraphics[height=2.9in]{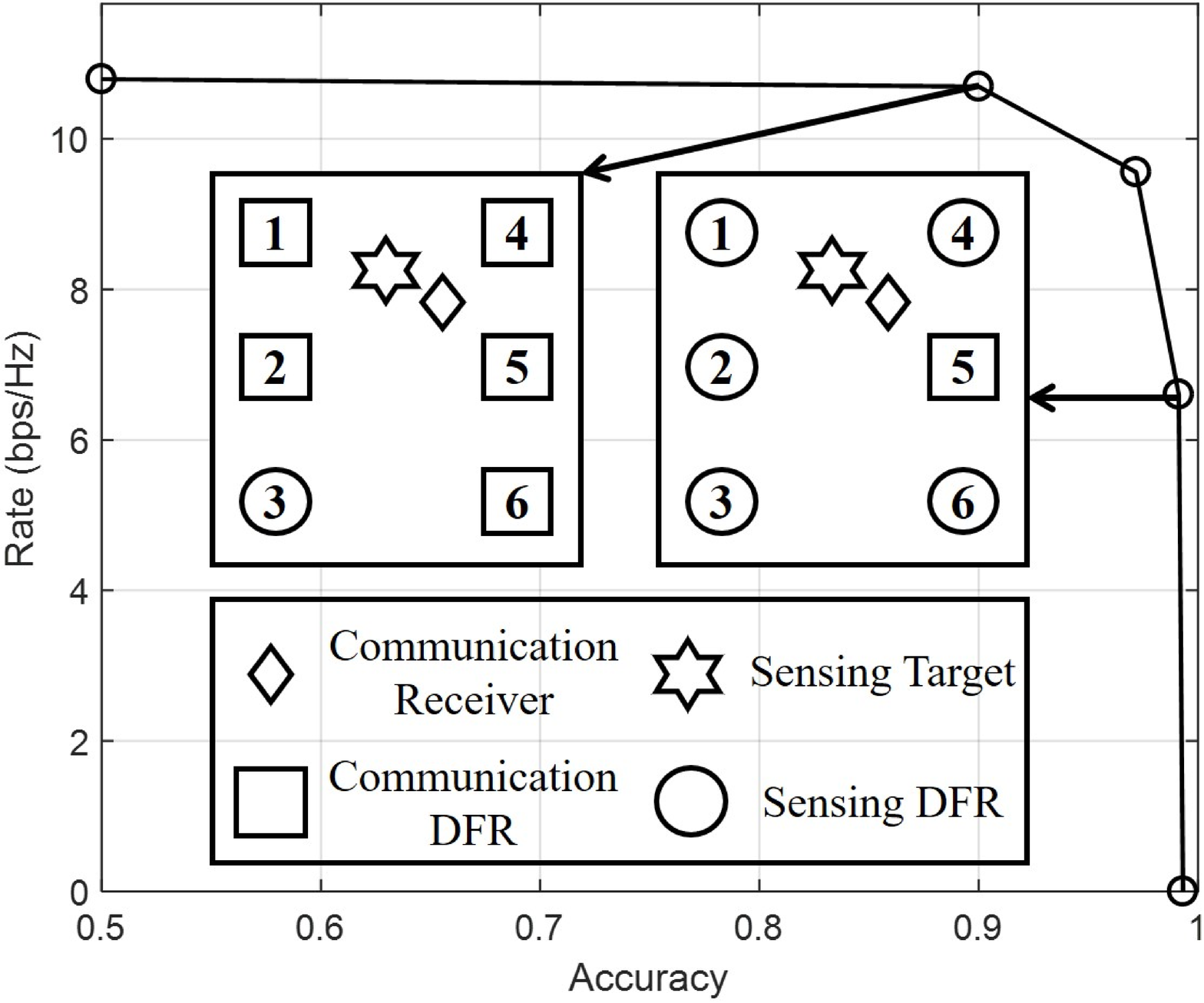}}
		\vspace{-0.3cm}
	\caption{(a) Comparison between the MPISAC, ISAC without fusion and multi-radar fusion;
	(b) Accuracy-rate region of the MPISAC system, where
the functionality results for the case of $|\mathcal{S}|=1$ and $|\mathcal{S}|=5$ are illustrated.}
	\label{Simulation_2} 
	\vspace{-6pt}
\end{figure*}

This section provides simulation results to illustrate the performance of the MPISAC scheme.
We simulate the case of $K=6$ in a conference room with size $3\times4.5\times3\,\rm{m}^3$.
A wireless radar simulator \cite{SPAWC} based on ray-tracing is leveraged to generate the spectrogram datasets at multi-view DFRs.
The target is either an adult (i.e., normal) or a child (i.e., abnormal).
The maximum transmit power is $P_T=10\,$mW, the sum transmit power is $P_{\rm{sum}}=50\,$mW, and the noise power is $\sigma^2=-50\,$dBm.
The channels are generated by using a distance-dependent path-loss model \cite{channel}.

First, Fig.~\ref{Simulation} demonstrates the sensing data when an adult is walking toward the DFR 5 (i.e., yellow circle) with a speed of $0.5\,$m/s.
It can be seen that the spectrograms at DFRs 3 and 4 (i.e., purple and green circles) are flat, as the human motion direction is orthogonal to their observation angles.
On the other hand, the spectrograms at DFRs 1 and 5 (i.e., red and yellow circles) are the most fluctuated, as the human motion is parallel to their observation angles.
Note that DFRs 1 and 5 have opposite patterns as they face different sides of the human body.
This visualization indicates that fusing the detection results from DFRs 1 and 5 may help improve the performance at DFRs 3 and 4.
However, the exact fusion gain should be further quantified.
To this end, we compare the performance of the proposed MPISAC with that of ISAC without fusion \cite{Fan_2} (i.e., $|\mathcal{S}|=1$) in Fig.~\ref{Comp}.
It can be seen that, by introducing the fusion mechanism, MPISAC significantly improves the detection accuracy and the gap quantifies the \textbf{fusion gain} brought by the diversity from effectively exploiting the multi-view observations across different DFRs as shown in Fig.~\ref{Simulation}.
The fusion gain vanishes as the sum power decreases with fewer effective DFRs involved.
Note that the proposed MPISAC without fusion achieves higher data-rate while guaranteeing the same sensing accuracy compared with ISAC without fusion.
This performance gain is brought by the proposed functionality selection.

Next, we compare the performance of MPISAC with that of multi-radar fusion which follows the principle in \cite{Fusion_1} (i.e., $\mathbf{x}=[1,1,1,\cdots]^T$) in Fig.~\ref{Comp}.
The multi-radar fusion scheme leads to a zero data-rate as all the devices work in the sensing state.
Furthermore, its sensing accuracy is worse than that of MPISAC with partial DFRs for sensing.
This is because the DFR far from the target would consume excessive power resources.
Adding a remote DFR would reduce the SINRs of other DFRs, resulting in fewer effective DFRs.
Consequently, switching remote DFRs to the communication mode would provide communication and sensing gains simultaneously.
We define this gain as \textbf{ISAC gain} which is significant under stringent power budgets and negligible with sufficient resources.

Finally, the accuracy-rate trade-off region is given in Fig.~\ref{Region} by varying the value of $\mu$ in $[0,1]$.
First, increasing $|\mathcal{S}|$ results in a larger accuracy but a lower data-rate.
Second, at $|\mathcal{S}|=1$, DFR 3 (i.e., farthest from the receiver) instead of DFR 1 (i.e., closest to the target) is selected for sensing.
This is because communication is more important than sensing at this boundary point.
In order to achieve a high data-rate, it is necessary to select a DFR with the worst communication channel for sensing, which is DFR 3 in the considered scenario as shown in Fig.~\ref{Region}.
On the other hand, at $|\mathcal{S}|=5$, we select the DFR closest to the receiver (i.e., DFR 5) for communication and the remaining DFRs for sensing.
This is because the sensing accuracy is close to $1$ by fusing $5$ arbitrary DFR outputs.
Finally, by varying the value of $\mu$, it is possible to achieve a higher data-rate (the far left boundary point),
a higher sensing accuracy (the far right boundary point) and a more balanced rate-accuracy pair (the middle boundary point).

\vspace{-0.2cm}
\section{Conclusion}

This letter derived a set of models for MPISAC systems to quantify the accuracy of cooperative sensing and the rate of cooperative communication.
The optimized functionality selection was proposed and the fusion/ISAC gain was illustrated, which shows that adaptive
selection between sensing and communication is important for the effective exploration of the ISAC trade-off region.

\appendices
\section{Proof of Proposition 1} \label{appx: Proposition1}

To prove part (i), we define $N=|\mathcal{S}|$ and
\begin{align}
	\mathcal{L}(n,\left\{P_i\right\}_{i=1}^N)&=\frac{1}{2}\sum\limits_{l=0}^{n-1}\sum\limits_{F\in\mathcal{F}_l}\prod
	_{i\in F}P_i\prod_{i\in F^c}(1-P_i),\nonumber\\
	\mathcal{B}(n,P)&=\frac{1}{2}\sum\limits_{l=0}^{n-1}\tbinom{N}{l}P^l(1-P)^{N-l}. \nonumber
\end{align}
As such, the primal and approximate fusion accuracy functions are written as
$\Theta = \mathcal{L}(n,\left\{P_i\right\}_{i=1}^N) + \mathcal{L}(N-n+1,\left\{Q_i\right\}_{i=1}^N)$ and
$\hat{\Theta} = \mathcal{B}(n,P) + \mathcal{B}(N-n+1,Q)$, respectively.
Based on the Poisson approximation theorem \cite{poisson} and
Chebyshev's inequality, we have \cite{Binomial}
\begin{align}
	\frac{1}{2}\sum_{n=1}^{N}\left|\mathcal{L}(n,\left\{P_i\right\}_{i=1}^N)-
	\mathcal{B}(n,P)\right|
    \leq\mathcal{D}\left(\left\{P_i\right\}_{i=1}^N\right).
\end{align}
Furthermore, according to the Triangle inequality, we have
\begin{align}
	\frac{1}{2}\sum_{n=1}^{N}\left|\Theta-
\hat{\Theta}\right|
&\leq\frac{1}{2}\sum_{n=1}^{N}\left|\mathcal{L}(n,\left\{P_i\right\}_{i=1}^N)-
\mathcal{B}(n,P)\right|
\nonumber\\
&\quad{}{}
+\frac{1}{2}\sum_{n=1}^{N}\left|\mathcal{L}(n,\left\{Q_i\right\}_{i=1}^N)-
\mathcal{B}(n,Q)\right|\nonumber
\\&\leq\mathcal{D}\left(\left\{P_i\right\}_{i=1}^N\right)+
	\mathcal{D}\left(\left\{Q_i\right\}_{i=1}^N\right).
\end{align}
This completes the proof for part (i).

To prove part (ii), we need to solve the optimal voting problem
	\begin{align}
	\mathop{\mathrm{max}}_{n}\quad&	\hat{\Theta}(\mathcal{S}|n), \quad
	\mathrm{s.t.}~~n\in[0,\cdots,N].	\label{voting}
	\end{align}
First, the formula of $\hat{\Theta}$ is transformed as
\begin{equation}
\begin{aligned}	
	\hat{\Theta} &= \frac{1}{2}\sum\limits_{l=0}^{n-1}\tbinom{N}{l}P^l(1-P)^{N-l} +
	\frac{1}{2}\sum\limits_{l=0}^{N-n}\tbinom{N}{l}Q^l(1-Q)^{N-l}\\
	&=\frac{1}{2}+\frac{1}{2}\sum\limits_{l=0}^{n-1}\tbinom{N}{l}[P^l(1-P)^{N-l}-Q^{N-l}(1-Q)^l]. \label{transformed}
\end{aligned}
\end{equation}	
Next, the derivative of $\hat{\Theta}(\mathcal{S}|n)$ w.r.t. $n$ is
\begin{equation}
\begin{aligned}	
	\frac{\partial \hat{\Theta}}{\partial n} &\approx \hat{\Theta}(\mathcal{S}|n+1)-\hat{\Theta}(\mathcal{S}|n)\\
	&= \frac{1}{2}\tbinom{N}{n}[P^n(1-P)^{N-n}-Q^{N-n}(1-Q)^n]. \label{process_1}
\end{aligned}
\end{equation}	
Third, the optimal value of $n$ is obtained when $\frac{\partial \hat{\Theta}}{\partial n}=0$:
\begin{equation}
	\left[P^n(1-P)^{N-n}-Q^{N-n}(1-Q)^n\right]=0. \label{process_2}
\end{equation}
Taking the logarithm on the both side yields
\begin{equation}
	n^\star = \frac{N\left[\ln Q-\ln(1-P)\right]}{\ln P-\ln(1-P)+\ln Q-\ln(1-Q)}. \label{n}
\end{equation}
Finally, denote $\alpha = \frac{\ln\frac{P}{1-Q}}{\ln\frac{Q}{1-P}}$ and round up $n^\star$.
We obtain $n^\diamond\approx \left \lceil \frac{N}{1+\alpha} \right \rceil$.
The proof is thus completed.

\vspace{-0.3cm}

\end{document}